\definecolor{green1}{RGB}{100,250,100}
\definecolor{green2}{RGB}{66,183,66}
\definecolor{green3}{RGB}{33,117,33}
\definecolor{green4}{RGB}{0,50,0}
\pgfplotsset{every axis/.append style={
    xlabel near ticks,
    ylabel style={yshift=-0.2cm},
    label style={font=\small},
    tick label style={font=\small}}}
\DeclareMathOperator{\dist}{dist}
\newcommand*{\Pa}{\mathrm{P}}
\newcommand*{\C}{\mathrm{C}}
\newcommand*{\nwspace}{\hspace*{.1em}}
\renewcommand{\todo}[2][]{\tikzexternaldisable\@todo[#1]{#2}\tikzexternalenable}
\renewcommand{\todo}[2][]{}
\providecommand{\ignore}[1]{}
\begin{document}


\title{Understanding the Effectiveness of Data Reduction in Public Transportation Networks}

\author{Thomas Bl{\"a}sius \and Philipp Fischbeck \and Tobias Friedrich \and Martin Schirneck}
\authorrunning{T. Bl{\"a}sius et al.}
\institute{Hasso Plattner Institute, University of Potsdam, Potsdam, Germany\\ \email{firstname.lastname@hpi.de}}

\maketitle

\begin{abstract}
  Given a public transportation network of stations and connections,
  we want to find a minimum subset of stations such that each
  connection runs through a selected station.  Although this problem
  is NP-hard in general, real-world instances are regularly solved
  almost completely by a set of simple reduction rules.  To explain
  this behavior, we view transportation networks as hitting set
  instances and identify two characteristic properties, locality and
  heterogeneity.  We then devise a randomized model to generate
  hitting set instances with adjustable properties.  While the
  heterogeneity does influence the effectiveness of the reduction
  rules, the generated instances show that locality is the significant
  factor.  Beyond that, we prove that the effectiveness of the
  reduction rules is independent of the underlying graph structure.
  Finally, we show that high locality is also prevalent in instances
  from other domains, facilitating a fast computation of minimum
  hitting sets.
  \keywords{Transportation networks \and Hitting set \and Graph algorithms \and Random graph models}
\end{abstract}

\section{Introduction}
\label{sec:introduction}

A public transportation network is a collection of stations along with
a set of connections running through these stations.  But beyond its
literal definition, via bus stops and train lines, it also carries
some of the geographical, social, and economical structure of the
community it serves. Given such a network, we want to select as few
stations as possible to \emph{cover} all connections, i.e., each
connection shall contain a selected station.  This and similar
covering problems arise from practical needs, e.g., when choosing
stations for car maintenance, but their solutions also reveal some of
the underlying structure of the network. Despite the fact that
minimizing the number of selected stations is NP-hard, there is a
surprisingly easy way to achieve just that on real-world instances:
Weihe~\cite{weihe1998covering} showed for the German railroad network
that two straightforward reduction rules simplify the network to a
very small core which can then be solved by brute force.  This is not
a mere coincidence. Experiments have shown the same behavior on
several other real-world transportation networks. Subsequently, the
reduction rules became the standard preprocessing routine for many
different covering problems. See the work of Niedermeier and
Rossmanith~\cite{niedermeier2003efficient},
Abu-Khzam~\cite{AbuKhzam10dHittingSet}, or Davies and
Bacchus~\cite{Davies11MAXSAT}, to name just a few.  This raises the
question as to why these rules are so effective.
Answering this question would
not only close the gap between theory and practice for the
specific problem at hand, but also has the potential to lead to new insights into
the networks' structure and ultimately pave the way for algorithmic advances 
in bordering areas.

Our methodology for approaching this question is as follows.  We first
identify two characteristic properties of real-world transportation
networks: \emph{heterogeneity} and \emph{locality}; see
Section~\ref{sec:hitt-set-persp} for more details.  Then we propose a
model that generates random instances resembling real-world instances
with respect to heterogeneity and locality.  We validate our model by
showing empirically that it provides a good predictor for the
effectiveness of the reduction rules on real-world instances.
Finally, we draw conclusions on why the reduction rules are so
effective by running experiments on generated instances of varying
heterogeneity and locality.  Moreover, we show that our results extend
beyond transportation networks to related problems in other domains.

For our model, we regard transportation networks as instances of the
\emph{hitting set problem}.  From this perspective, connections are
mere subsets of the universe of stations and we need to select one
station from each set.  Note that this disregards some of the
structure inherent to transportation networks: A connection is not
just a set of stops but a sequence visiting the stops in a particular
order.  In fact, the sequences formed by the connections are paths in
an underlying graph, which itself has rich structural properties
inherited from the geography.  Focusing on these structural
properties, we also consider the \emph{graph-theoretic perspective}.
The working hypothesis for this perspective is that the underlying
graphs of real-world transportation networks have beneficial
properties that render the instances tractable.  We disprove this
hypothesis by showing that the underlying graph is almost irrelevant.
This validates the hitting set perspective, which disregards the
underlying graph.


In Section~\ref{sec:results}, we formally state our findings on the
graph-theoretic as well as the hitting set perspective.  We study the
hitting set instances of European transportation networks in
Section~\ref{sec:analysis-real-world}, identifying heterogeneity and
locality as characteristic features.  In
Section~\ref{sec:analysis-generated}, we define and evaluate a model
generating instances with these features.
Section~\ref{sec:other_domains} extends our findings to other domains
and Section~\ref{sec:conclusion} concludes this work.

\section{Preliminary Considerations}
\label{sec:results}

Before discussing the results regarding the two different
perspectives, we fix some notations and state the reduction rules
introduced by Weihe~\cite{weihe1998covering}. A \emph{public
transportation network} (or simply a \emph{network}) $N \,{=}\, (S, C)$
consists of a set $S$ of \emph{stations}
and a set $C$ of \emph{connections} which are sequences of stations.
That is, each connection $c\in C$ is a subset of $S$ together with a linear
ordering of its elements. Two stations $s_1, s_2 \in S$ are
\emph{connected} in $N$ if there exists a sequence of stations starting with
$s_1$ and ending in $s_2$ such that each pair of consecutive
stations shares a connection. The subnetworks induced by this equivalence relation
are called the \emph{connected components} of $N$.
Given $N \,{=}\, (S, C)$, the \textsc{Station Cover} problem is to find a
subset $S' \subseteq S$ of minimum cardinality such that each connection
is \textit{covered}, i.e., $S' \cap c \not= \emptyset$ for every $c \in C$.
%
%
The reduction rules by Weihe~\cite{weihe1998covering} are
based on notions of dominance, both between stations and connections.
For two different stations $s_1, s_2 \in S$,
$s_1$ \emph{dominates} $s_2$ if every connection containing $s_2$ also
contains $s_1$. If so, there is always an optimal station cover without
$s_2$, so it is never worse to select $s_1$ instead.  Thus, removing
$s_2$ from $S$ and from every connection in $C$ yields an
equivalent instance.  Similarly, for two different connections
$c_1, c_2 \in C$, $c_1$ \emph{dominates} $c_2$ if $c_1 \subseteq c_2$.
Every subset of $S$ covering $c_1$ then also covers $c_2$.  Removing
$c_2$ does not destroy any optimal solutions.
Weihe's algorithm can thus be summarized as follows.  Iteratively
remove dominated stations and connections until this is no longer
possible.  The remaining instance, the \emph{core}\footnote{We note
  that the core is unique up to automorphisms.  In particular, its
  size is independent of the removal order.}, is solved using brute
force.  Each connected component can be solved independently and the
running time is exponential only in the number of stations.  Thus, the
\emph{complexity} of an instance denotes the maximum number of
stations in any of its connected components.

The proofs of this section are in Appendix~\ref{sec:appendix-treewidth}.

\subsection{Graph-Theoretic Perspective}
\label{sec:graph-theor-persp}

One way to represent a network $N = (S, C)$ is via an undirected graph $G_N$
defined as follows. The stations $S$ are the vertices of $G_N$; for each
connection $(s_1, \dots, s_k) \in C$, $G_N$ contains the edges
$\{s_i, s_{i+1}\}_{1 \le i < k}$. The basic hypothesis of the
graph-theoretic perspective is that certain properties of $G_N$ make
the real-world \textsc{Station Cover} instances easy.

Consider a leaf $u$ in $G_N$, i.e., a degree-1 vertex.  If there is a
connection that contains only $u$, then this dominates all other
connections containing $u$.  Otherwise, all connections that contain
$u$ also contain its unique neighbor.  Thus, $u$ is dominated and
removed by the reduction rules.  We obtain the following proposition.
The \emph{2-core} is the subgraph obtained by iteratively removing
leaves~\cite{seidman83minimumDeg}.

\begin{proposition}
  \label{prop:reduce-to-2-core}
  The reduction rules reduce any \textsc{Station Cover} instance $N$
  to an equivalent instance $N'$ such that $G_{N'}$ is a subgraph of
  the 2-core of $G_{N}$, with additional isolated vertices.
\end{proposition}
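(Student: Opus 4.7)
The plan is to apply the reduction rules in an order that emulates the iterative leaf-peeling used to compute the 2-core of $G_N$. Concretely, as long as the underlying graph $G_M$ of the current reduced instance $M$ has a leaf $u$ with unique neighbor $v$, I would do one of two things. If $M$ contains the singleton connection $\{u\}$, then $\{u\}$ connection-dominates every other connection that contains $u$, and removing all of those dominated connections isolates $u$ in $G_M$. Otherwise, every connection $c$ containing $u$ satisfies $|c|\geq 2$, so $c$ enters $u$ through the unique edge $\{u,v\}$ and hence contains $v$; therefore $v$ station-dominates $u$ and $u$ can be deleted. Stations that end up in no connection at all are vacuously dominated and dropped as well. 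Iterating mirrors the 2-core peeling of $G_N$ step by step.

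The key structural lemma is that none of these leaf-level reductions ever introduces a new edge into the underlying graph. For the station-dominance case I first observe that a leaf $u$ of $G_M$ must sit at the first or last position of every connection it participates in: if $u$ were at an interior position $i$ of some $(s_1,\dots,s_k)$, both $s_{i-1}$ and $s_{i+1}$ would be neighbors of $u$ in $G_M$, contradicting that $u$ has the unique neighbor $v$ and that a connection visits each station at most once. Hence deleting $u$ from a connection merely drops a terminal element, which cannot splice two previously non-adjacent stations together. The connection-dominance case is immediate: deleting entire connections can only remove edges, never add them.

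With this monotonicity, $E(G_M)\subseteq E(G_N)$ is maintained throughout the procedure. When it terminates, every non-isolated vertex of the resulting $G_{N'}$ has degree at least two, while all its edges still lie in $G_N$. Since the 2-core of $G_N$ is the inclusion-maximal subgraph of $G_N$ with minimum degree at least two, the non-isolated part of $G_{N'}$ is contained in the 2-core of $G_N$; the remaining vertices are by construction isolated in $G_{N'}$, which is exactly what the proposition asserts. The main obstacle is the no-new-edges lemma: restricting the dominance steps to those triggered by leaves is essential, since otherwise deleting a dominated station from the interior of a connection would splice two previously non-adjacent stations into an edge that might lie outside the 2-core.
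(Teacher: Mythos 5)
Your proof is correct and follows essentially the same route as the paper's: the paper's entire argument is the observation preceding the proposition that a leaf $u$ either has a singleton connection $\{u\}$ that dominates every other connection through $u$, or is station-dominated by its unique neighbor, so that iterated leaf removal mirrors the peeling that produces the 2-core. Your additional lemma --- that a leaf occupies a terminal position in every connection containing it, so deleting it never splices a new edge into the underlying graph --- is a point of care the paper leaves implicit, and it is precisely what is needed to justify the ``subgraph of the 2-core'' claim (as opposed to a mere statement about vertex sets), since removing a dominated \emph{interior} station of a connection would indeed create edges absent from $G_N$.
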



\noindent
Proposition~\ref{prop:reduce-to-2-core} identifies the number of
vertices in the \mbox{$2$-core} of $G_N$ as an upper bound for the
\emph{core complexity}.  The following theorem shows that this bound
is arbitrarily bad. Supporting this assessment, we will see in
Section~\ref{sec:analysis-real-world} that the $2$-cores of the graphs
of real-world instances are rather large, while their core complexity
is significantly smaller.

\begin{theorem}
  \label{thm:reduction-to-2-core}
  For every graph $G$, there exist two \textsc{Station Cover}
  instances $N_1$ and $N_2$ with $G = G_{N_1} = G_{N_2}$ such that the
  core of $N_1$ has complexity~$1$ while the core of $N_2$ corresponds
  to the $2$-core of $G$.
\end{theorem}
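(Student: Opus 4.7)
For $N_1$, I would take the singleton connection $(v)$ for every vertex $v\in V(G)$ together with the length-$2$ connection $(u,v)$ for every edge $\{u,v\}\in E(G)$. Since singletons contribute no edges, $G_{N_1}=G$. Every singleton $(v)$ is a subset of each length-$2$ connection through $v$, so connection dominance removes all length-$2$ connections, and afterwards no singleton dominates another and no station is dominated. The core therefore consists of $|V(G)|$ one-station components and has complexity~$1$.

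For $N_2$, the idea is to make the reduction rules stop exactly at the 2-core $G^{*}$. A structural observation I would use is that every connected component $T$ of the induced subgraph on $V(G)\setminus V(G^{*})$ is a tree attached to $G^{*}$ through at most one vertex $r_T\in T$, with a unique 2-core neighbor $n_T$; otherwise the two attachment points together with a path through $T$ would form a cycle whose interior vertices retained degree at least two throughout peeling, contradicting their removal from $G^{*}$. Rooting each such tree at $r_T$, I would define $N_2$ by (a)~the length-$2$ connection $(u,v)$ for every edge of $G^{*}$, plus (b)~for every leaf $\ell$ of each rooted tree $T$, one connection formed by concatenating the $T$-path from $\ell$ to $r_T$, the edge $(r_T,n_T)$, and one $2$-core edge $(n_T,s_\ell)$ for an arbitrary neighbor $s_\ell$ of $n_T$ in $G^{*}$. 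Since every tree edge lies on some leaf-to-root path, $G_{N_2}=G$.

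For the reduction, each (b)-connection ends with the pair $(n_T,s_\ell)$, which is itself an (a)-connection, so connection dominance removes every (b)-connection. Afterwards each tree station occurs in no remaining connection, is vacuously dominated, and is removed. Only (a)-connections remain. Within $G^{*}$, every vertex has at least two distinct neighbors in $G^{*}$, so no single station lies on every length-$2$ connection through another vertex, and no two distinct length-$2$ connections are in a containment relation. The reduction therefore halts at the Station Cover instance on $G^{*}$, and the core corresponds to the 2-core of $G$ as claimed.

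The main obstacle is arranging the (b)-connections so that they are eliminated by \emph{connection} dominance rather than by \emph{station} dominance. A naive choice of a length-$2$ connection $(\ell,p)$ for each tree leaf $\ell$ with parent $p$ would first remove $\ell$ via station dominance, collapse that connection to the singleton $(p)$, and then dominate every longer connection through $p$, cascading into $G^{*}$ and shrinking the core far below the 2-core. Padding each (b)-connection with one extra $2$-core edge at its tip produces a short matching (a)-connection that removes it cleanly while leaving $G^{*}$ untouched.
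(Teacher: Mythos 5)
Your proof is correct, and for the crux of the theorem---the construction of $N_2$---it is essentially the paper's argument: a two-station connection for each $2$-core edge, plus, for the attached trees, longer connections that each contain exactly one $2$-core edge so that they are eliminated by connection dominance before any station-dominance cascade can shrink a connection to a singleton and eat into the $2$-core. You even flag the same pitfall (naive $(\ell,p)$ leaf connections collapsing to singletons) that the paper's sketch is designed to avoid; your leaf-to-root paths are a concrete instantiation of the paper's ``simple path containing $s$ and exactly one $2$-core edge.'' The one genuine difference is $N_1$: the paper routes one simple path per edge through a single designated hub station $s$, so that $s$ dominates every other station and the core is a single station; you instead add a singleton connection per vertex, so that connection dominance deletes all edge connections and the core consists of $|V(G)|$ isolated one-station components. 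Both meet the stated bound, since complexity is the maximum number of stations in a connected component; the paper's variant yields a core with one station total, yours a core with $|V(G)|$ stations in trivial components, which is a cosmetic rather than substantive difference. One degenerate case that both your argument and the paper's sketch leave implicit: a connected component of $G$ whose $2$-core is empty (a tree component) has no attachment vertex $n_T$, so the $N_2$ construction needs a separate easy treatment there (and for a single edge the core cannot literally be empty, so ``corresponds to the $2$-core'' must be read up to isolated leftover stations).
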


\noindent
Theorem~\ref{thm:reduction-to-2-core} disproves the working hypothesis
of the graph-theoretic perspective.  For any connected graph that has
no leaves, there is a \textsc{Station Cover} instance that is
completely solved by the reduction rules, and another instance on the
very same graph that is not reduced at all.  Furthermore, unless the
$2$-core is small, the theorem shows that it is impossible to tell
whether or not the reduction rules are effective on a given instance
by only looking at the graph.

So far, we have only focused on Weihe's algorithm.  While our main goal is
to explain the performance of this algorithm, one could argue that
other methods exploiting different graph-theoretic properties are
better suited to solve real-world instances. The next theorem,
however, indicates that this is not the case.  Even on ``tree-like''
graph classes \textsc{Station Cover} remains NP-hard.  The reduction
used to prove this theorem was originally given by
Jansen~\cite{jansen2015structural}.


\begin{theorem}[\cite{jansen2015structural}, Theorem~5]
  \label{thm:treewidth}
  \textsc{Station Cover} is NP-hard even if the corresponding graph
  has treewidth~3 or feedback vertex number~2.
\end{theorem}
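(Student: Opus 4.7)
The plan is to follow Jansen's reduction~\cite{jansen2015structural} from the classical NP-hard \textsc{Vertex Cover} problem to \textsc{Station Cover}, and to verify that the constructed graph $G_N$ meets the claimed structural bound. The guiding principle is that the combinatorial difficulty of \textsc{Station Cover} lies in the collection of connections rather than in the graph $G_N$: by routing every connection through a common backbone of constant size, the graph $G_N$ itself becomes structurally trivial, while the set system of connections can still encode an arbitrary \textsc{Vertex Cover} instance.

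Concretely, given an input graph $H=(V,E)$, I would construct a Station Cover instance $(S,C)$ with $S = V \cup B \cup A$. Here $B$ is a backbone of constant size (two pivot stations for the feedback vertex number variant, three stations forming a short path for the treewidth variant) and $A$ contains a constant number of auxiliary stations per edge of $H$. Each edge $e = \{u,v\}$ of $H$ is turned into a connection whose station sequence starts at $u$, traverses the auxiliaries of $e$ together with $B$, and ends at $v$. By construction every edge of $G_N$ has an endpoint in the backbone, so deleting $B$ leaves an edgeless graph; this gives feedback vertex number at most $2$ for the first variant, and a tree decomposition of width at most $3$ (take a single bag $B\cup\{v\}$ for every remaining station $v$) for the second. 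The correspondence argument then has the standard two directions: any vertex cover of $H$ directly yields a station cover of the same size, and conversely any station cover can be rewritten without increasing its size by replacing auxiliary or backbone stations by endpoints of the associated edge.

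The main obstacle is exactly that the backbone lies on every connection and therefore constitutes a universal cover of constant size, which would trivialize the instance unless carefully neutralized. To prevent this collapse, per-edge dummy connections or multiplicities must be added so that backbone stations are non-profitable in any optimum solution; for instance, by making each backbone station dominated by an endpoint along its own connection, so that Weihe's reduction rules eliminate it before the \textsc{Vertex Cover} encoding is decoded. Designing these dummies so that (i)~the structural parameter of $G_N$ stays within the prescribed bound, (ii)~the size correspondence with vertex covers of $H$ is exact, and (iii)~the whole reduction runs in polynomial time, is the technical heart of the proof and would form the bulk of the formal write-up.
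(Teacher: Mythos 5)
There is a genuine gap, and it sits exactly where you place ``the technical heart of the proof.'' If a constant-size backbone $B$ lies on \emph{every} connection, then $B$ itself is a station cover of constant size, so the optimum is at most $|B|$ and the instance is solvable in polynomial time by brute force over constant-size subsets --- the reduction then proves nothing. Your proposed repair does not work: to ``make each backbone station dominated by an endpoint,'' that endpoint would have to appear in every connection containing the backbone station, i.e.\ in every connection, making it a universal cover of size~$1$; and in any case dominance only guarantees that \emph{some} optimum avoids the dominated station, it does not prevent the optimum from being trivially small. What is needed is a family of connections that \emph{avoid} the backbone and force the optimum to be large, and your edge-to-connection encoding of \textsc{Vertex Cover} has no natural place for them (per-vertex dummy gadgets tend to make the optimum independent of the vertex cover number). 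So the plan as written cannot be completed by routine work.

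The paper's proof (following Jansen) resolves precisely this issue by reducing from \textsc{3-SAT} rather than \textsc{Vertex Cover}. For each variable there is a connection $(x,\overline{x})$ consisting only of the two literal stations; these $n$ connections are pairwise disjoint and disjoint from the backbone $\{z_0,z_1\}$, so any cover of size $n$ must pick exactly one literal per variable and can never afford a backbone station. Each clause is a connection $(x_a,z_0,x_b,z_1,x_c)$ threaded through the backbone, which keeps $G_N$ at feedback vertex number~$2$ (delete $z_0,z_1$) and treewidth~$3$ (bags $\{x,\overline{x},z_0,z_1\}$ on a path), while the clause connections must be hit by the chosen literals, i.e.\ by a satisfying assignment. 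Your structural analysis of $G_N$ (every edge meets the backbone, hence small feedback vertex set and width-$3$ bags $B\cup\{v\}$) is sound and matches the paper; it is the source problem and the budget-forcing gadget that you would need to change.
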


\subsection{Hitting Set Perspective}
\label{sec:hitt-set-persp}

Another way to represent a network $N$ is by an instance of the
\textsc{Hitting Set} problem.  Here, the connections $C \subseteq 2^S$
are regarded only as sets of stations (ignoring their order). An
optimal cover is a minimum-cardinality subset of $S$ that has a
non-empty intersection with all members of $C$.  This perspective
turns out to be much more fruitful.  In the next section, we analyze
the \textsc{Hitting Set} instances stemming from $12$ real-world
networks.  To summarize our results, we observe that the instances are
heterogeneous, i.e., the number of connections containing a given
station varies heavily.  Moreover, the instances exhibit a certain
locality, which probably has its origin in the stations' geographic
positions.

In more detail, for a station $s \in S$, let the number of connections
in $C$ that contain $s$ be the \emph{degree} of $s$.  Conversely, for
$c \in C$, $|c|$ is its \emph{degree}.  The connection degrees of the
real-world instances are rather homogeneous, i.e., every connection
has roughly the same size.  Although there are different types of
connections they appear to have a similar number of stops.  The
station degrees, on the other hand, vary strongly.  In fact, we
observe that the station degree distributions roughly follow a power
law.  This is in line with observations that, e.g., the sizes of
cities are power-law distributed~\cite{gabaix1999zipf}.  To quantify
the locality of an instance, we use a variant of the so-called
\emph{bipartite clustering coefficient}~\cite{robins2004small}.

We conjecture that heterogeneity of stations and locality of the
network are the crucial factors that make the reduction so effective.
If the station degrees vary strongly, chances are that some
high-degree station exists that dominates many low-degree ones.
Moreover, if locality is high, there tend to be several connections
differing only in few stations and stations appearing in similar sets
of connections. This increases the likelihood of dominance among the
elements of both $S$ and $C$.  To verify this hypothesis empirically,
we propose a model for generating instances of varying heterogeneity
and locality.  Our findings suggest that higher heterogeneity
decreases the core complexity, but the deciding factor is the
locality.  Finally, we observe that locality is also prevalent in
other domains.  As predicted by our model, preprocessing also greatly
reduces these instances.

\section{Analysis of Real-World Networks}
\label{sec:analysis-real-world}



\begin{table}[t!]
\small
  \centering
  \renewcommand{\tabcolsep}{7pt}
  \begin{tabular}{lr
        S[table-format=2.1] 
        S[table-format=1.1] 
        S[table-format=1.1] 
        S[table-format=1.2] 
        S[table-format=1.2] 
        cc}
    Data Set           & $|S|$\ \     & $\tfrac{|S|}{|C|}$ & $\delta_S$ & $\beta$ & KS   & $\kappa$ & \hspace{-3.5pt}$2$-core\hspace{-3.5pt} & core\ \ \\
    \midrule
    \texttt{\scriptsize sncf}      & \SI{1742}{}  & 4.0             & 2.2        & 3.3     & 0.03 & 0.47     & 70\%& 0.3\%   \\
    \texttt{\scriptsize nl}        & \SI{4558}{}  & 13.2            & 1.5        & 3.8     & 0.04 & 0.40     & 70\%& 2.8\%   \\
    \texttt{\scriptsize kvv}       & \SI{2115}{}  & 8.0             & 2.1        & 3.5     & 0.03 & 0.48     & 72\%& 0.8\%   \\
    \texttt{\scriptsize vrs}       & \SI{5491}{}  & 10.7            & 1.9        & 3.5     & 0.03 & 0.27     & 83\%& 0.1\%   \\
    \texttt{\scriptsize rnv}       & \SI{705}{}   & 12.4            & 1.4        & 4.2     & 0.06 & 0.38     & 54\%& 0.1\%   \\
    \texttt{\scriptsize athens}    & \SI{5729}{}  & 24.4            & 1.8        & 3.9     & 0.04 & 0.30     & 89\%& 4.7\%   \\
    \texttt{\scriptsize petersburg}  & \SI{4264}{}  & 6.5             & 2.5        & 4.0     & 0.03 & 0.31     & 86\%& 8.3\%   \\
    \texttt{\scriptsize warsaw}    & \SI{3944}{}  & 13.0            & 1.8        & 5.9     & 0.05 & 0.29     & 80\%& 5.9\%   \\
    \texttt{\scriptsize luxembourg}   & \SI{2484}{}  & 7.3             & 2.7        & 2.9     & 0.02 & 0.25     & 84\%& 0.2\%   \\
    \texttt{\scriptsize switzerland} & \SI{22535}{} & 5.6             & 2.0        & 4.5     & 0.02 & 0.33     & 71\%& 1.7\%   \\
    \midrule                                                                   
    \texttt{\scriptsize vbb}       & \SI{3031}{}  & 16.5            & 1.4        & 12.4    & 0.05 & 0.38     & 73\%& 1.8\%   \\
    \texttt{\scriptsize db}        & \SI{514}{}   & 0.9             & 15.7       & 2.0     & 0.07 & 0.28     & 78\%& 0.2\%   \\
  \end{tabular}
  \caption{Transportation networks with atypical instances
    separated. Shown are the number $|S|$ of stations, the
    station-connection ratio $|S|/|C|$, average station degree
    $\delta_S$, estimated power-law exponent $\beta$, corresponding KS
    distance, bipartite clustering coefficient $\kappa$, relative
    $2$-core size, and the relative core complexity.}
  \label{table:real-data}
\end{table}

We examined several public transportation networks from different cities
(\texttt{athens}, \texttt{petersburg}, \texttt{warsaw}), rural areas
(\texttt{sncf}, \texttt{kvv}, \texttt{vrs}, \texttt{rnv},
\texttt{vbb}), and countries (\texttt{nl}, \texttt{luxembourg},
\texttt{switzerland}, \texttt{db}). The networks are taken from the
\url{transitfeeds.com} repository. The raw data has the General
Transit Feed Specification (GTFS) format.  It stores multiple
connections for the same route, one for each time a train actually
drives that route.  For each route, only one connection was used.
\todo{Table~\ref{table:real-data}}Table~\ref{table:real-data} gives an overview of the relevant features
of the resulting networks.

We reduced each instance to its largest component.  For most of them,
only a small fraction of stations and connections are disconnected
from this component.  A notable exception is the
\texttt{vbb}-instance, representing the public transportation network
of the city of Berlin, Germany.  In total, it has \SI{13424}{}
stations while its largest component has only \SI{3031}{}.  The reason
is that different modes of transport are separated in the raw data.
As a result, \texttt{vbb} has rather uncommon features.
Another unusual case is the \texttt{db}-instance of the German railway
network.  \todo{Table~\ref{table:real-data}}Table~\ref{table:real-data} shows that most instances have a
station-connection ratio $|S|/|C|$ of roughly $10$.  For \texttt{db},
however, this ratio is at $0.9$ much smaller.

\subsubsection{Heterogeneity.}

The average station degree $\delta_S$ of the investigated networks
is a small constant around $2$, independent of the instances'
complexity. The only exception is the \texttt{db}-network. This 
can be explained by the atypical value for $|S|/|C|$, and that each
station is contained in much more connections. The average
connection degree $\delta_C = \delta_S \cdot |S|/|C|$ (not
explicitly given in the table) is roughly $20$, due to the
station-connection ratios all being of the same order.

\begin{figure}[t!]
    \begin{minipage}[t]{0.49\linewidth}
      \includegraphics{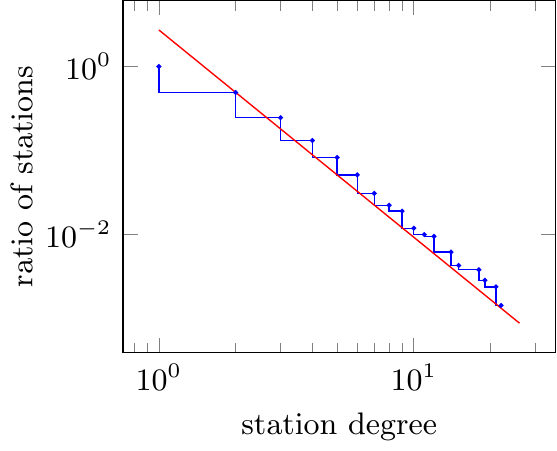}
    \end{minipage}%
    \hfill
  \begin{minipage}[t]{0.49\linewidth}
  	\includegraphics{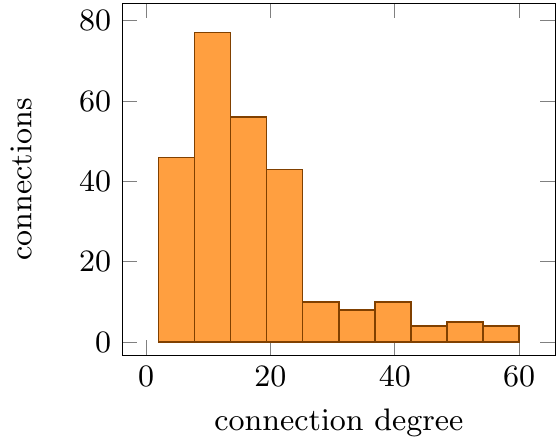}
  \end{minipage}
  \caption{\textbf{(left)}~The blue line is the CCDF of station
    degrees for the data set \texttt{kvv}.  The red line is the
    estimated power-law distribution. \textbf{(right)}~The histogram
    of connection degrees for the data set \texttt{kvv}.}
  \label{fig:histograms-kvv}
\end{figure}

Beyond small average degree, almost all instances exhibit strong
heterogeneity among the degrees of different stations. We take a
closer look at the \texttt{kvv}-instance as a prototypical example,
representing the public transportation network of Karlsruhe, Germany.
\todo{Figure~\mbox{\ref{fig:histograms-kvv}~(left)}}Figure~\mbox{\ref{fig:histograms-kvv}~(left)} 
shows the \emph{complementary cumulative distribution function} (CCDF)
of the station degrees in a log-log plot.  For a given value $x$, the
CCDF describes the share of stations that have degree at least~$x$.
The CCDF closely resembles a straight line (in log-log scaling),
indicating a \emph{power-law distribution}. That means, there exists a
real number $\beta$, the \textit{power-law exponent}, such that the
number of stations of degree $x$ is roughly proportional to
$x^{-\beta}$.  We estimated the power-law exponents using the
\texttt{python} package \texttt{powerlaw}~\cite{alstott2014powerlaw}.
For \texttt{kvv}, the exponent $\beta$ is approximately $3.5$.
%
%
The goodness of fit is measured by the \emph{Kolmogorov--Smirnov
  distance} (KS distance), which is the maximum absolute difference
between the CCDFs of the measurement and of the assumed
distribution. The KS distance for the \texttt{kvv} is $0.05$.
\todo{Table~\ref{table:real-data}}Table~\ref{table:real-data} reports both the power-law exponents and
the corresponding KS distances. The estimated values of $\beta$,
excluding the outlier \texttt{vbb}, indicate a high level of
heterogeneity. As a side note, the power-law exponent for \texttt{vbb}
is $4.1$ with a KS distance of $0.03$ when considering the whole
network instead of the largest component.
In contrast, the connection degrees are rather homogeneous,
cf.~e.g.~\texttt{kvv} in
\todo{Figure~\mbox{\ref{fig:histograms-kvv}~(right)}}Figure~\mbox{\ref{fig:histograms-kvv}~(right)}.  A possible
explanation is that long-distance trains stop less frequently.

\subsubsection{Locality.}

To measure locality, we adapt the \emph{bipartite clustering
  coefficient}~\cite{robins2004small}. Intuitively, it states how
likely it is that two stations which share a connection are also
contained together in a different connection, or that two connections
containing the same station also have another station in common.  For
a formal definition, first note that we can interpret a
\textsc{Hitting Set} instance $(S,C)$ as a bipartite graph with the
two partitions $S$ and $C$ and an edge joining $s \in S$ and $c \in C$
iff $s \in c$.  Let $\#_{\Pa_3}$ denote the number of paths of
length~$3$ and $\#_{\C_4}$ the number of cycles of length~$4$ in this
graph.  The bipartite clustering coefficient $\kappa$ then is defined
as $\kappa = 4 \,{\cdot}\, \#_{C_4}/\#_{\Pa_3}$.  It is the
probability that a uniformly chosen \mbox{$3$-path} is contained in a
\mbox{$4$-cycle}.  Before computing $\kappa$, we normalize the
bipartite graph by reducing it to its 2-core, which removes any attached
trees.  In doing so, the measure becomes more robust for our purpose, as
attached trees do not impact the difficulty of an instance (they get
removed by the reduction rules) while they decrease the clustering
coefficient.

%
%

The clustering coefficients are reported in
\todo{Table~\ref{table:real-data}}Table~\ref{table:real-data}.  All instances have a clustering
coefficient of at least $0.25$, which indicates a high level of
locality.  A possible explanation are the underlying geographic
positions of the stations, with nearby stations likely appearing in
the same connection.

\subsubsection{Degree of Reduction.}

We measure the effectiveness of the reduction rules using the
\emph{relative core complexity}.  It is the percentage of stations
that remain after exhaustively applying the preprocessing.
\todo{Table~\ref{table:real-data}}Table~\ref{table:real-data} shows that the resulting relative core
complexity is very low for all 12 instances.  This is in line with the
original findings of Weihe~\cite{weihe1998covering}, who applied the
reduction rules on a few select European train networks.  Moreover, it
generalizes these results to networks of different scales, from urban
to national.  On the other hand, the $2$-core is typically not much
smaller than the original instance.  This shows that
Proposition~\ref{prop:reduce-to-2-core} cannot explain the
effectiveness of the reduction rules, which supports our previous
assessment that the graph-theoretic perspective is not sufficient.


Judging from \todo{Table~\ref{table:real-data}}Table~\ref{table:real-data}, we believe that
heterogeneity of the stations and high locality are the crucial
properties rendering the preprocessing so effective.  Notwithstanding,
it is also worth noting that the reduction rules work well on all
instances, including \texttt{vbb} which is not very heterogeneous.
The clustering on the other hand is high for all instances, indicating
that locality is more important.
Also, the \texttt{db} and \texttt{vbb} outliers seem to show that the
influence of the station-connection ratio and the average station
degree is limited.  Though looking at these $12$ networks can provide
clues to what features are most important, it is not sufficient to draw a
clear picture.  In the following, we thoroughly test the effect of
different properties on the effectiveness of the reduction rules
by generating instances with varying properties.

\section{Analysis of Generated Instances}
\label{sec:analysis-generated}

This section discusses the generation and analysis of artificial
\textsc{Hitting Set} instances.  First, we present our model of
generation which is based on the \textit{geometric inhomogeneous
  random graphs}~\cite{bringmann17GIRGLinearTime}.  It
allows creating networks with varying degree of heterogeneity and
locality.  We then analyze these instances with respect to the degree
of reduction.


\subsection{The Generative Model}
\label{subsec:generation_model}

In the field of network science, it is generally accepted that vertex
degrees in realistic networks are heterogeneous~\cite{voitalov2018scale}.
A power-law distribution can be explained, inter alia, by the
preferential attachment mechanism~\cite{barabasi1999emergence}.
Beyond the generation of heterogeneous instances, different models
have been proposed to also account for locality.  The latter models
typically use some kind of underlying geometry.  One of the earliest
works in that direction is by Watts and Strogatz~\cite{ws-c-98}.  More
recently, and closer to our aim, Papadopoulos et
al.~\cite{pks-pvsgn-12} introduced the concept of popularity vs.\
similarity, making the creation of edges more likely, the more popular
and similar the connected vertices are.  They also observed that these
two dimensions are naturally covered by the hyperbolic geometry,
leading to hyperbolic random graphs~\cite{krioukov2010hyperbolic}.
Bringmann, Keusch, and Lengler~\cite{bringmann17GIRGLinearTime}
generalized this concept to \emph{geometric inhomogeneous random
  graphs} (GIRGs).  There, each vertex has a geometric position and a
weight.  Vertices are then connected by edges depending on their
weights and distances.  Despite a plethora of models for generating
graphs, we are not aware of models generating heterogeneous
\textsc{Hitting Set} instances.  The closest is arguably the work by
Giráldez-Cru and Levy~\cite{j-lrsi-17}, who generate \textsc{SAT}
instances using the popularity vs.\ similarity paradigm.

To generate \textsc{Hitting Set} instances with varying heterogeneity
and locality, we formulate a randomized model based on GIRGs.  Each
station and connection has a weight representing its importance.
Moreover, stations and connections are randomly placed in a geometric
space.  The distance between stations and connections then provides a
measure of similarity.  In the \textsc{Hitting Set} instance, some
station $s$ is a member of connection $c$ with a probability
proportional to the combined weights of $s$ and $c$ and inverse
proportional to the distance between the vertices $s$ and $c$.  To
make this more precise, let \mbox{$w_S \colon S \to \mathbb R$} and
\mbox{$w_C \colon C \to \mathbb R$} be two weight functions; we omit
the subscript when no ambiguity arises.  For $s \in S$ and $c\in C$,
let $\dist(s, c)$ denote the geometric distance between the
corresponding vertices.  Finally, fix two positive constants $a,T > 0$.
Then, station $s$ is contained in connection $c$ with probability
\begin{equation}
  \label{eq:model}
  P(s, c) =\min\left\lbrace 1,\ \left( a\cdot\frac{w(s)w(c)}{\dist(s, c)}\right)^{1/T} \right\rbrace.
\end{equation}

The parameter $a$ governs the expected degree.  The
\textit{temperature} $T$ controls the influence of the geometry.  For
$T \,{\to}\, 0$ the method converges to a step model, where $s$ is
contained in $c$ if and only if
$\dist(s,c) \,{\le}\, a \nwspace w(s)w(c)$.
Larger temperatures soften this threshold, allowing $s\in c$ for
larger distances, and $s \notin c$ for smaller distances, with a low
probability.  Thus, $T$ influences the locality of the instance.

The remaining degrees of freedom are the choice of the underlying
geometry and the weights.  For the geometry, we use the unit circle.
Positions for stations and connections are drawn uniformly at random
from $[0, 1]$ and the distance between $x, y \in [0, 1]$ is
$\min\{|x \,{-}\, y|, 1 \,{-}\, |x \,{-}\, y|\}$.  This is arguably
the simplest possible symmetric geometry.

To choose the weights properly, it is important to note that the
resulting degrees are expected to be proportional to the
weights~\cite{bringmann17GIRGLinearTime}.  Thus, we mimic the
real-world instances by choosing uniform weights for the connections
and power-law weights, with varying exponent $\beta$, for the
stations.  It is not hard to see that for $\beta \to \infty$, the
latter converge to uniform weights as well.  In summary, adjusting
$\beta$ controls the heterogeneity.

\subsection{Evaluation}
\label{subsec:eval}

We generate artificial networks and measure the dependence of their
relative core complexity on the heterogeneity and locality.  The size
of an instance has three components: the (original) complexity $|S|$,
the station-connection ratio $|S|/|C|$, and the average station degree
$\delta_S$.  Note that these values also determine the number $|C|$ of
connections and average connection degree $\delta_C$.  From the model,
we have the two parameters we are most interested in, the power-law
exponent $\beta$ and the temperature $T$.  We also consider the limit
case of uniform weights for all vertices; slightly abusing notation,
we denote this by $\beta = \infty$.

For the main part of the experiments, we used $|S| = \SI{2000}{}$,
$|S|/|C| = 10.0$, and $\delta_S = 2.0$, leaning on the respective
properties for the real-world instances.  We let $T$ vary between $0$
and $1$ in increments of $0.05$, and $\beta$ between $2$ and $5$ in
increments of $0.25$.  For each combination, we generated ten samples.
In the following, we first validate the data.  Then we examine the
influence of heterogeneity and locality.  Afterwards we test whether
our findings still hold true for different station-connection ratios
and station degrees.

\subsubsection{Data Validation.}

There are two aspects to the data validation.  First, the instances
should approximately exhibit the properties we explicitly put in,
i.e., the values of $|S|$, $|S|/|C|$, $\delta_S$, and the power-law
behavior.  Second, the implicit properties should also be as expected.
In our case, we have to verify that changing $T$ actually has the
desired effect on the bipartite clustering coefficient~$\kappa$.

Concerning the complexity $|S|$, note that a sampled instance per se
does not need to be connected.  If it is not, we again only use the
largest component. Thus, when generating an instance with
$\SI{2000}{}$ stations, the resulting complexity is actually a bit
smaller. There are typically many isolated stations due to the small
average station degree, this is particularly true for small power-law
exponents.  However, the complexity of the largest component never
dropped below $\SI{1000}{}$ and usually was between $\SI{1400}{}$ and
$\SI{1700}{}$ provided that \mbox{$\beta>2.5$}.  The transition to the
largest component mainly meant ignoring isolated stations. Thus, also
the station-connection ratio $|S|/|C|$ decreases slightly. For
\mbox{$\beta > 3$} it was typically around $8$ and always at least
$7$.  For smaller $\beta$, it is never below $5$.

Recall that the average station degree $\delta_S$ is controlled by
parameter $a$ in Equation~\ref{eq:model}. It is a constant in the
sense that it is independent of the considered station-connection
pair. However, it does depend on other parameters of the model.  As
there is no closed formula to determine $a$ from $\delta_S$, we
estimated it numerically. This estimation incurred some loss in
accuracy but yielded values of $\delta_S$ between $1.9$ and $2.1$,
very close to the desired $\delta_S = 2$.  Transitioning to the
largest component typically slightly increases $\delta_S$, as the
largest component is more likely to contain stations of higher degree.
Anyway, $\delta_S$ never went above~$2.7$.

As with the real-world instances, we estimated the exponent $\beta$ of
the generated instances.  For small values, the estimates matched the
specified values.  For larger exponents, the gap increases slightly,
e.g., an estimated $\beta=5.7$ for an instance with predefined
parameter~$5.0$.

\begin{figure*}[tb]
  \centering
  \begin{minipage}{0.45\textwidth}
    \includegraphics{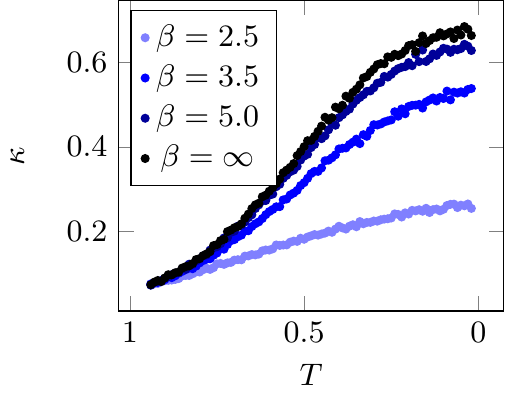}
  \end{minipage}\hfill
  \begin{minipage}{0.45\textwidth}
  	\includegraphics{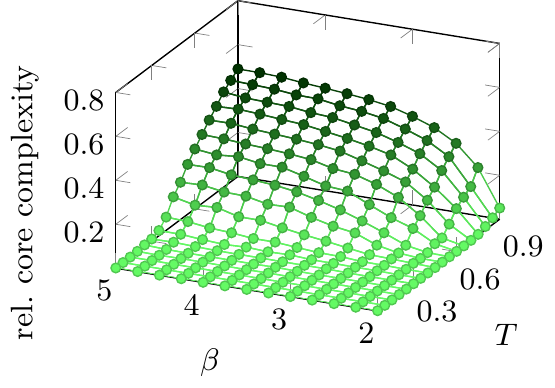}
  \end{minipage}
  \caption{\textbf{(left)}~The clustering coefficient $\kappa$
    depending on the temperature $T$ for different
    $\beta$. \textbf{(right)}~The relative core complexity depending
    on the power-law exponent $\beta$ and temperature $T$.}
  \label{fig:clustering_vs_temperature_and_3d}
\end{figure*}

\begin{figure*}[tb]
  \begin{minipage}[t]{0.49\textwidth}
  	\includegraphics{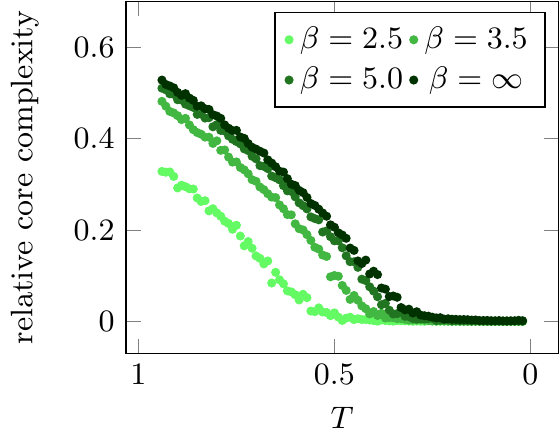}
  \end{minipage}\hfill
  \begin{minipage}[t]{0.49\textwidth}
  	\includegraphics{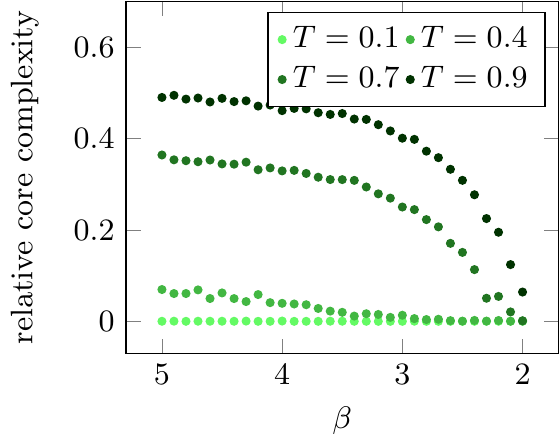}
  \end{minipage}
  \caption{The relative core complexity \textbf{(left)} depending on
    the temperature~$T$ (for different power-law exponents $\beta$),
    and \textbf{(right)}~depending on $\beta$ (for different $T$).}
  \label{fig:core-both-parameters}
\end{figure*}

Finally, we examined the dependency between the bipartite clustering
coefficient $\kappa$ and the temperature~$T$; see
\todo{Figure~\mbox{\ref{fig:clustering_vs_temperature_and_3d}~(left)}}Figure~\mbox{\ref{fig:clustering_vs_temperature_and_3d}~(left)}.  As
desired, the clustering coefficient increases with falling
temperatures.  More precisely, $\kappa$ ranges between $0$ and some
maximum attained at $T = 0$.  The value of this maximum depends on the
power-law exponent $\beta$, with smaller $\beta$ giving smaller
maxima.
  
\subsubsection{Heterogeneity and Locality.}

For each instance described above, we computed the relative core
complexity.  To reduce noise, we look at the arithmetic mean of ten
samples for each parameter configuration.  The measured core
complexities and clustering coefficients in fact showed only small
variance, with almost all values differing at most $5\%$ from the
respective mean.  These measurements are presented in
\todo{Figure~\mbox{\ref{fig:clustering_vs_temperature_and_3d}~(right)}}Figure~\mbox{\ref{fig:clustering_vs_temperature_and_3d}~(right)}, 
showing the mean core complexity depending on the temperature $T$ and
the power-law exponent $\beta$. For all parameter configurations, the
relative core complexity was at most $50\%$.  This is due to the low
average station degree, which leads to dominant low-degree stations.
More importantly, the core complexity varies strongly for different
values of $T$ and~$\beta$.


The complexity decreased both with lower temperatures and lower
power-law exponents. This further supports our claim that
heterogeneity and locality both have a positive impact on the
effectiveness of the reduction rules. The locality, however, seems to
be more vital.  To make this precise, low temperatures lead to small
cores, independent of $\beta$, as shown in
\todo{Figure~\mbox{\ref{fig:core-both-parameters}~(left)}}Figure~\mbox{\ref{fig:core-both-parameters}~(left)}. 
Even under uniform station weights ($\beta = \infty$), temperatures
below $0.3$ consistently produced instances with empty core.  In
contrast, the power-law exponent has only a minor impact.  One can see
in
\todo{Figure~\mbox{\ref{fig:core-both-parameters}~(right)}}Figure~\mbox{\ref{fig:core-both-parameters}~(right)} 
that for low temperatures, the core complexity is (almost) independent
of $\beta$.  For higher temperatures, the core complexities remain
high over wide ranges of $\beta$, except for very low exponents.

In summary, high locality seems to be the most prominent feature that makes
\textsc{Station Cover} instances tractable, independent of their heterogeneity.
Heterogeneity alone reduces the core complexity only slightly, except for extreme cases
(very low power-law exponents). It is thus not the crucial factor. 
In the following, we verify this general behavior
also for alternative model parameters such as station-connection ratio or average station degree.

\begin{figure*}[tb]
  \centering
  \begin{minipage}{0.45\textwidth}
  	\includegraphics{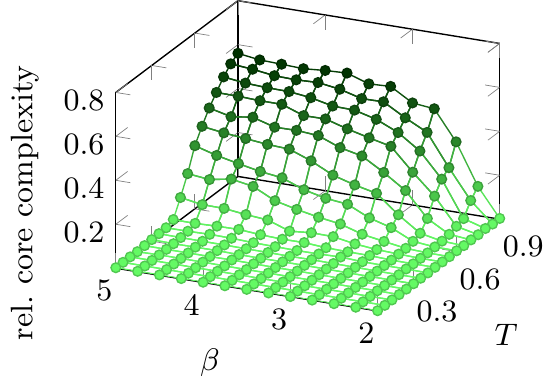}
  \end{minipage}\hfill
  \begin{minipage}{0.45\textwidth}
  	\includegraphics{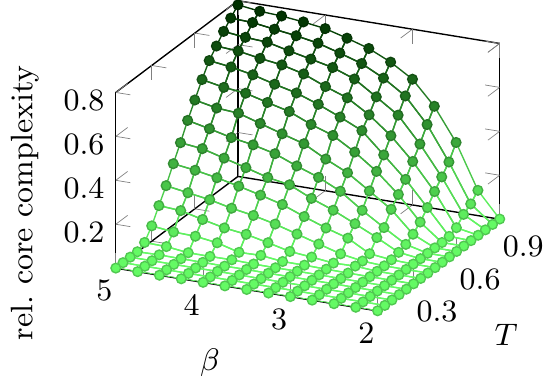}
  \end{minipage}
  \caption{The relative core complexity depending on the power-law
    exponent $\beta$ and temperature $T$ under alternative model
    parameters.  \textbf{(left)}~The station-connection ratio is
    $|S|/|C| = 4.0$ (instead of $10.0$). \textbf{(right)}~The average
    station degree is $\delta_S = 5.0$ (instead of $2.0$). }
  \label{fig:both-3d-different}
\end{figure*}

\subsubsection{Station-Connection Ratio.}

Recall that we fixed a ratio of $|S|/|C| = 10.0$ for the main part of our experiments.
To see
whether our observations are still valid for different settings, we
additionally generated data sets with mostly the same parameters as before,
except for $|S|/|C| = 4.0$.  The result are shown in
\todo{Figure~\mbox{\ref{fig:both-3d-different}~(left)}}Figure~\mbox{\ref{fig:both-3d-different}~(left)}. 
Comparing this to
\todo{Figure~\mbox{\ref{fig:clustering_vs_temperature_and_3d}~(right)}}Figure~\mbox{\ref{fig:clustering_vs_temperature_and_3d}~(right)}, 
one can see that general dependence on $T$
and $\beta$ is very similar. However, there are subtle differences. Under the smaller
station-connection ratio the instances are tractable even for larger
temperatures, up to $T = 0.5$ instead of the earlier $0.3$, i.e., for lower locality.
Also, the maximum core complexity over all combinations of $T$ and $\beta$ is larger than before,
reaching almost $60\%$, compared to the $50\%$ for $|S|/|C| = 10.0$.
In summary, in the (more realistic) low-temperature regime, a lower station-connection ratio seems to further improve the effectiveness of the reduction rules.

\begin{figure*}[tb]
  \centering
  \begin{minipage}[t]{0.45\textwidth}
  	\includegraphics{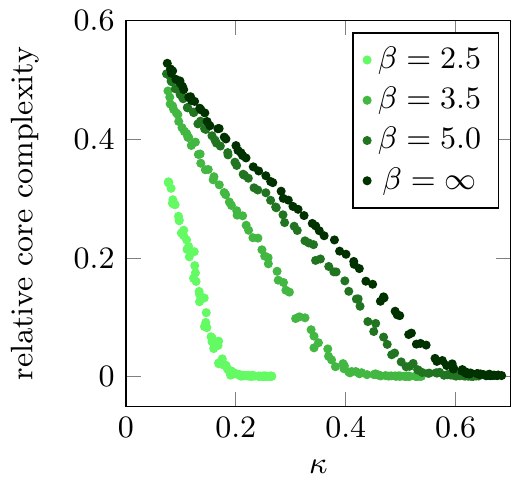}
  \end{minipage}\hfill
  \begin{minipage}[t]{0.45\textwidth}
  	\includegraphics{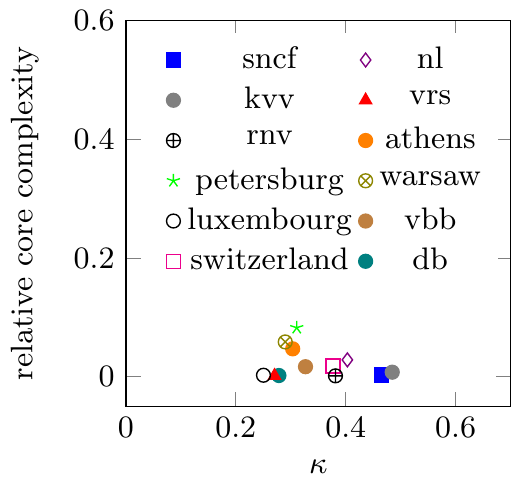}
  \end{minipage}
  \caption{The relative core complexity depending on the clustering
    coefficient $\kappa$ \textbf{(left)}~for generated instances with
    different values of the power-law exponent $\beta$, and
    \textbf{(right)}~for the real-world instances.}
  \label{fig:core-parameter}
\end{figure*}

\subsubsection{Average Degree.}

To examine the influence of the average station degree $\delta_S$, we
generated another instance set with the same parameters as the main
one, except that we increased the $\delta_S$ from $2.0$ to $5.0$.  The
results are shown in
\todo{Figure~\mbox{\ref{fig:both-3d-different}~(right)}}Figure~\mbox{\ref{fig:both-3d-different}~(right)}. 
Again the general behavior is similar, but now lower temperatures are
necessary to render the instances tractable.  Moreover, the maximum
core complexity increases significantly, reaching up to almost
$100\%$.  Generally, a smaller average degree makes the reduction
rules more effective.  Intuitively speaking, the existence of stations
with low degree increases the likelihood that the reduction rule of
station dominance can be applied.

\subsubsection{Comparison with Real-World Instances.}

To compare generated and real-world instances more directly, we
investigate the dependence between the relative core complexity and
the bipartite clustering coefficient $\kappa$, instead of the
model-specific temperature $T$.
\todo{Figure~\ref{fig:core-parameter}}Figure~\ref{fig:core-parameter} 
shows the results.  Several of the real-world networks are well
covered by the model. For example, our findings from the generated
instances can be directly transferred to \texttt{sncf} and
\texttt{kvv}.  They have power-law exponents of $3.3$ and $3.5$,
respectively, as well as clustering coefficients of at least $0.47$,
which explains their small core complexity.  Moreover, for
\texttt{luxembourg}, the model can explain the low complexity in spite
of a small clustering coefficient of $0.25$. The network exhibits a
small exponent $\beta = 2.9$, which benefits the effectiveness of the
preprocessing.  On the other hand, \texttt{petersburg} has clustering
$\kappa = 0.31$, but also a comparatively large core complexity of
over $8\%$. Here, the main factor seems to be the high power-law
exponent of $4.0$.

Notwithstanding, there are also some real-world instances that have an
unexpectedly low core complexity, which cannot be fully explained by
the model.  The \texttt{vrs}-instance has a low clustering coefficient
of $\kappa = 0.27$ and a high power-law exponent of $3.5$, but still a
very low $0.1\%$ core complexity.  The reason seems to be its low
average station degree of $\delta_S = 1.9$.  The
\texttt{switzerland}-instance also has a low core complexity of
$1.7\%$, despite its high power-law exponent of $\beta = 4.5$ and a
clustering coefficient of $\kappa = 0.33$.  Especially the high value
of $\beta$ would point to a much higher complexity, however, its
station-connection ratio $|S|/|C| = 5.6$ is significantly lower than
that of the generated instances.
%
All in all, the networks generated by the model are not perfectly realistic.
However, the model does replicate properties that are crucial for the effectiveness
of the reduction rules on real-world instances. Furthermore, in the interplay of heterogeneity and locality, it reveals locality as the more important property.

\section{Impact on Other Domains}
\label{sec:other_domains}

Although the focus of this paper is to understand which structural
properties of public transportation networks make Weihe's reduction
rules so effective, our findings go beyond that.  Our experiments on
the random model predict that \textsc{Hitting Set} instances in
general can be solved efficiently if they exhibit high locality.
Moreover, if the instance is highly heterogeneous, a smaller
clustering coefficient suffices; see
\todo{Figure~\mbox{\ref{fig:core-parameter}~(left)}}Figure~\mbox{\ref{fig:core-parameter}~(left)}.  The element-set ratio
and the average degree have, within reasonable bounds, only a minor
impact on the effectiveness.  Our experiments in
Section~\ref{subsec:eval} showed that instances are more difficult for
a larger average degree and if the difference between the number of
elements and the number of sets is high.  Experiments not reported in
this paper show that the latter is also true, if there are more sets
than elements (i.e., the ratio is below~$1$).

\subsubsection{Data Sets.}

We consider \textsc{Hitting Set} instances from three different
applications; see \todo{Table~\ref{table:real-data-other}}Table~\ref{table:real-data-other}.  The first set of
instances are metabolic reaction networks of Escherichia coli
bacteria.  The elements represent reactions and each set is a
so-called \emph{elementary mode}.  Analyzing the hitting sets of these
instances has applications in drug discovery.
%
%
The corresponding data sets, labeled \texttt{ec-*}, were generated
with the \texttt{Metatool}~\cite{kamp2006metatool}.
In the second type of instance, the sets consist of so-called
\emph{elementary pathways} that need to be hit by \emph{interventions}
that suppress all signals, which is relevant, inter alia, for the
treatment of cancer.  The data sets, \texttt{EGFR.*} and
\texttt{HER2.*}, were obtained via the \texttt{OCSANA}
tool~\cite{vera2013ocsana}.
%
The instances \texttt{country-cover} and \texttt{language-cover} are
based on a country-language graph, taken from the network collection
\texttt{KONECT}~\cite{kunegis2013konect}, with an edge between a
country and a language if the language is spoken in that country.  The
corresponding \textsc{Hitting Set} instances ask for a minimum number
of countries to visit to hear all languages, and for a minimum number
of languages necessary to communicate with someone in every country,
respectively.

\begin{table}[tb]
    \centering
    \renewcommand{\tabcolsep}{7pt}
    \small
    \begin{tabular}{lr
            r 
            S[table-format=1.2] 
            r
        }
        Data Set    		 & $|S|$ & $|S|/|C|$ & $\kappa$ & core \\
        \midrule
        \texttt{\scriptsize ec-acetate} 	& \SI{57}{}  & 0.214 & 0.67 & 1.8\% \\
        \texttt{\scriptsize ec-succinate} 	& \SI{57}{}  & 0.061 & 0.59 & 1.8\% \\
        \texttt{\scriptsize ec-glycerol} 	& \SI{60}{}  & 0.028 & 0.66 & 1.7\% \\
        \texttt{\scriptsize ec-glucose} 	& \SI{58}{}  & 0.009 & 0.66 & 36.2\% \\
        \texttt{\scriptsize ec-combined} 	& \SI{64}{}  & 0.002 & 0.62 & 40.6\% \\
        \texttt{\scriptsize EGFR.short} 	& \SI{50}{}  & 0.400 & 0.66 & 2.0\% \\
        \texttt{\scriptsize EGFR.sub} 		& \SI{56}{}  & 0.239 & 0.57 & 1.8\% \\
        \texttt{\scriptsize HER2.short} 	& \SI{123}{} & 0.230 & 0.53 & 9.8\% \\
        \texttt{\scriptsize HER2.sub} 		& \SI{172}{} & 0.068 & 0.55 & 10.4\% \\
		\texttt{\scriptsize country-cover} 	& \SI{248}{}  & 0.407 & 0.11 & 0.4\% \\
		\texttt{\scriptsize language-cover} & \SI{610}{} & 2.460 & 0.11 & 0.2\% \\
    \end{tabular}
    \caption{\textsc{Hitting Set} instances from other domains.
      Listed are the number $|S|$ of elements, the element-set ratio $|S|/|C|$, the bipartite clustering coefficient $\kappa$, and the relative core complexity.}
    \label{table:real-data-other}
\end{table}

\subsubsection{Evaluation.}

The basic properties of the instances and the effectiveness of the
reduction rules are reported in \todo{Table~\ref{table:real-data-other}}Table~\ref{table:real-data-other}.
The results match the prediction of our model: most instances have a
high clustering coefficient and the reduction rules are very
effective.  The only instances that stand out at first glance are
\texttt{ec-glucose}, \texttt{ec-combined}, \texttt{HER2.short}, and
\texttt{HER2.sub}, which are not solved completely by the reduction
rules despite their high clustering coefficients, as well as
\texttt{country-cover} and \texttt{language-cover}, which are solved
completely despite the comparatively low clustering coefficient of
$\kappa = 0.11$.

However, a more detailed consideration reveals that these instances
also match the predictions of the model.  First, the two instances
\texttt{country-cover} and \texttt{language-cover} are very
heterogeneous with power-law exponent $\beta = 2.2$.  As can be seen
in \todo{Figure~\ref{fig:clustering_vs_temperature_and_3d}}Figure~\ref{fig:clustering_vs_temperature_and_3d}~(left), a
clustering coefficient of $\kappa = 0.11$ is already rather high for
this exponent, leading to a low core complexity; see
\todo{Figure~\ref{fig:core-parameter}}Figure~\ref{fig:core-parameter}~(left).

The instances \texttt{ec-glucose} and \texttt{ec-combined} have skewed
element-set ratios (more than 100 times as many sets as elements) and
a high average degree ($\SI{30}{}$ for the sets; $\si{3k}$ and
$\si{13k}$ for the elements, respectively).  Thus, these instances at
least qualitatively match the predictions of the model that the
reduction rules are less effective if the element-set ratio is skewed
or the average degree is high.  One obtains a similar but less
pronounced picture for \texttt{HER2.short} and \texttt{HER2.sub}.


\section{Conclusion}
\label{sec:conclusion}

We explored the effectiveness of data reduction for \textsc{Station
  Cover} on transportation networks.  Our main finding is that
real-world instances have high locality and heterogeneity, and that
these properties make the reduction rules effective, with locality
being the crucial factor.  This directly transfers to general
\textsc{Hitting Set} instances.  For future work, it would be
interesting to rigorously prove that the reduction rules perform well
on the model.




{
\bibliographystyle{splncs04}
\bibliography{../references}
}

\clearpage
\appendix

\section{Omitted Proofs}
\label{sec:appendix-treewidth}

\setcounter{theorem}{0}

In this appendix, we sketch the omitted proofs of the theorems in Section~\ref{sec:results}.

\begin{theorem}
	For every graph $G$, there exist \textsc{Station Cover} instances
	$N_1$ and $N_2$ with $G = G_{N_1} = G_{N_2}$ such that the core of
	$N_1$ has complexity $1$ while the core of $N_2$ corresponds to the
	$2$-core of $G$.
\end{theorem}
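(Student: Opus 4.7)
The plan is to construct both instances explicitly: $N_1$ so that a singleton trick makes the reduction collapse everything to complexity $1$, and $N_2$ so that a carefully padded family of path-connections makes the reduction peel off exactly the tree parts of $G$.

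For $N_1$, I would take $S = V(G)$ and let $C_1$ consist of the edge-connection $(u,v)$ for every $\{u,v\} \in E(G)$, together with a singleton connection $(v)$ for every $v \in V(G)$. Since singletons contribute no edges, $G_{N_1}=G$. Each singleton $(v)$ is a proper subset of every edge-connection containing $v$, so connection dominance discards every edge-connection. What remains is a collection of disjoint singletons, so every station forms its own connected component and the core has complexity $1$.

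For $N_2$, let $H$ be the $2$-core of $G$. I would include the edge-connection $(u,v)$ for each $\{u,v\}\in E(H)$ and, for every tree $T$ attached to $H$ at some vertex $r$, fix an $H$-neighbor $r'$ of $r$ and add, for each leaf $\ell$ of $T$, the unique $\ell$-to-$r$ path inside $T$ extended by one more step to $r'$. These connections cover $E(G)$ while introducing no foreign edges, so $G_{N_2}=G$. Each extended connection properly contains the $2$-core edge-connection $(r,r')$ as a subset, hence is discarded by connection dominance. The tree vertices, now appearing in no remaining connection, are eliminated by vacuous station dominance. What is left is the instance on $H$ with only its $2$-element edge-connections. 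In $H$ every vertex has degree at least $2$, so for any $v\in V(H)$ and any $s\neq v$ there is an edge-connection incident to $v$ that omits $s$; this rules out station dominance, and distinct $2$-element connections are pairwise incomparable, ruling out connection dominance. Thus the core corresponds exactly to the $2$-core of $G$, matching the bound of Proposition~\ref{prop:reduce-to-2-core}.

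The main obstacle will be certifying that the reduction on $N_2$ halts precisely at $H$. The one-step extension into the $2$-core is what makes each tree connection a proper superset of an existing $2$-core edge-connection, so that it can be discarded cleanly instead of first being shortened to a singleton that would then dominate and strip $2$-core edge-connections (as happens with the naive construction in which tree edges are left as bare edge-connections). Making this invariant rigorous for arbitrary branching trees and multiple attachment points, and handling the corner case of isolated tree components of $G$, which have no $H$-attachment and may leave at most one dominated vertex each in the core, is where most of the bookkeeping lies.
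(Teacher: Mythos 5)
Your construction of $N_2$ is essentially the paper's: edge-connections on the $2$-core plus tree-connections extended by exactly one step into the $2$-core, precisely so that each tree-connection becomes a proper superset of an existing two-element connection and is discarded wholesale rather than being whittled down to a singleton that would cascade into the $2$-core --- you identify and justify exactly the same key invariant the paper's sketch relies on, and your degree-$\geq 2$ argument for why the reduction halts on $H$ is sound. For $N_1$ you take a genuinely different (and equally valid) route: the paper routes every connection through one designated hub station $s$ (a simple path through $s$ covering each edge), so that $s$ dominates every other station and the core is a single vertex; you instead add a singleton connection at every vertex so that connection dominance deletes all edge-connections and the core splinters into $|S|$ isolated one-station components, which also has complexity $1$ under the paper's definition. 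The paper's variant needs $G$ connected (handled component-wise) to build the paths through $s$; yours does not, at the price of a core that is ``trivial'' rather than literally a single station. The one point both you and the paper leave informal is the case of components of $G$ whose $2$-core is empty (isolated trees), where the reduction can leave a residual singleton per component; you flag this explicitly, and it is consistent with Proposition~\ref{prop:reduce-to-2-core}'s allowance for leftover isolated vertices, so I see no substantive gap.
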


\begin{proof}[Proofsketch]
	We assume that $G$ is connected, otherwise we apply the following to
	each component.  For $N_1$, we choose connections such that there is
	a single designated station $s$ contained in each connection.  Such
	connections can be constructed as follows.  For an edge $\{u, v\}$
	of $G$, construct a simple path that contains $s$ and the edge
	$\{u, v\}$.  We use the stations on this path as a connection in
	$N_1$ and repeat this for every edge.  It can be easily verified
	that $G_{N_1} = G$ and that the core of $N_1$ has complexity~1 ($s$
	dominates all other stations).
	
	For $N_2$, first assume that $G$ has no leaves, i.e., $G$ is already
	the $2$-core.  It is then not hard to see that having one connection
	for each edge containing exactly its two endpoints prohibits any of
	the two reduction rules.  In the general case, we have to be a bit
	more careful as applying reduction rules to leaves of $G$ can lead
	to connections that contain only a single station.  These then lead
	to further reductions that can potentially cascade through the
	$2$-core.  We prevent this as follows.  For every edge in the $2$-core
	of $G$, we add the connection containing its two endpoints (as
	before).  For every station $s$ not in the $2$-core, we find a simple
	path that contains $s$ and exactly one edge from the $2$-core.  We add
	the connection containing the stations of this path to $N_2$.  One
	can verify that indeed $G_{N_2} = G$ as every edge in $G$
	corresponds to a pair of consecutive stations in at least one
	connection.  Moreover, exhaustively applying the reduction rules
	leads to the $2$-core of $G$ with one connection for each edge.
\end{proof}

\begin{theorem}[\cite{jansen2015structural}, Theorem~5]
	\label{thm:treewidth-appendix}
	\textsc{Station Cover} is NP-hard even if the corresponding graph
	has treewidth~3 or feedback vertex number~2.
\end{theorem}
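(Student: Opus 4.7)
My plan is to establish hardness by a polynomial-time reduction from an NP-hard problem into Station Cover on a graph of small feedback vertex number. Since any graph with feedback vertex number at most $2$ has treewidth at most $3$, a single reduction targeting the feedback vertex number suffices to yield both conclusions of the theorem.

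I would start from \textsc{Vertex Cover} on general graphs. Given an instance $(H,k)$ with $H=(V,E)$, the plan is to build a network $N$ whose underlying graph $G_N$ consists of two designated ``hub'' stations $h_1,h_2$ together with a forest-like collection of gadget stations attached to the hubs. For each vertex $v\in V$ introduce a small gadget (for example a single pendant $p_v$, or a short pendant path, joined to the appropriate hub). Removing $\{h_1,h_2\}$ then leaves a forest, which establishes $\mathrm{fvs}(G_N)\le 2$ (and hence $\mathrm{tw}(G_N)\le 3$). To encode each edge $\{u,v\}\in E$, add a connection realised as a simple path of the form $p_u - h_i - p_v$ for a suitable hub $h_i$. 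The equivalence ``$H$ has a vertex cover of size $k$'' $\Longleftrightarrow$ ``$N$ has a station cover of size $k+\text{const}$'' should then follow by interpreting chosen pendants as chosen vertices.

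The main obstacle is that the two hubs participate in many connections and could trivially cover them on their own; without care, an optimal station cover would just pick $h_1,h_2$ and ignore the pendants entirely, collapsing the correspondence with \textsc{Vertex Cover}. To prevent this, I would surround each hub with a ``penalty'' gadget that forces the cost of selecting it to exceed any benefit: for instance, attach to each hub $h_i$ a family of parallel single-station connections on fresh ``sibling'' stations that dominate $h_i$, so that including $h_i$ is never useful. Such a gadget must itself only add leaves/pendants, preserving the forest structure after deleting $\{h_1,h_2\}$.

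The final step is a routine correctness verification: given a vertex cover $C\subseteq V$ of size $k$, select $\{p_v:v\in C\}$ plus the forced pendant stations from the penalty gadgets to obtain a station cover of prescribed size; conversely, given an optimal station cover of $N$, argue (using the penalty structure) that it contains neither hub and that projecting its pendant stations to $V$ yields a vertex cover of the same cardinality up to the additive constant. The heart of the argument is thus designing the penalty gadgets so that simultaneously (i) the hubs are excluded from every optimal cover, (ii) the graph retains $\mathrm{fvs}\le 2$, and (iii) the translation between solutions is bijective up to the known overhead.
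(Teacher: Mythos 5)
Your reduction has a genuine gap, and it sits exactly in the part you defer to ``designing the penalty gadgets.'' In your construction every edge connection is realised as a path $p_u - h_i - p_v$, so it \emph{contains} a hub as a station; consequently the set $\{h_1,h_2\}$ together with whatever stations your gadgets force hits every edge connection, and $N$ always has a station cover of size $c+2$ (with $c$ the forced overhead), independently of whether $H$ has a vertex cover of size $k$. The instance is thus a yes-instance for essentially all $k$, and the equivalence with \textsc{Vertex Cover} collapses. This cannot be repaired by penalty gadgets: \textsc{Station Cover} is unweighted, so every station costs exactly $1$ and no attached structure can make selecting $h_i$ ``more expensive'' --- adding singleton connections on fresh sibling stations only forces those siblings into the cover (they neither dominate $h_i$ nor are dominated by it) and does nothing to stop $h_i$ from covering all edge connections in one stroke. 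Nor can you reroute the edge connections around the hubs: a connection containing both $p_u$ and $p_v$ must be a path in $G_N$, and once $G_N$ minus the hubs is a forest, any such path for an arbitrary edge set of $H$ would have to pass through a hub (or else $G_N$ would have to embed $H$'s cycles, destroying $\mathrm{fvs}\le 2$).

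The paper's proof (following Jansen) reduces from \textsc{3-SAT} and evades precisely this trap by a tight-budget argument rather than a penalty argument. Besides the clause paths $(x_a,z_0,x_b,z_1,x_c)$ through the two hubs $z_0,z_1$, it includes $n$ pairwise disjoint connections $(x,\overline{x})$, one per variable, none of which contains a hub. Any cover must pick at least one station from each of these $n$ disjoint pairs, so a cover of size exactly $n$ consists of exactly one literal per variable and cannot afford either hub; the clause paths must then be hit by literal vertices, which is exactly satisfiability. Your \textsc{Vertex Cover} reduction has no analogue of these hub-avoiding, budget-exhausting pair connections, which is why the hubs cannot be excluded. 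Your observation that $\mathrm{fvs}\le 2$ implies treewidth at most $3$ is correct and matches the paper, but the core of the reduction as proposed does not go through.
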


\begin{proof}[Proofsketch]  
	We prove this by reducing from \textsc{3-SAT}.  Let $\varphi$ be a
	propositional formula on $n$ variables.  We create two vertices $x$
	and $\overline{x}$ for every literal and connect them.  Then, we add
	two vertices $z_0$ and $z_1$ which are connected to every $x$
	vertex.  For every literal, we add a path $(x, \overline{x})$.  For
	every clause $(a \vee b \vee c)$ we add a path
	$(x_a, z_0, x_b, z_1, x_c)$ where each $x_i$ is the vertex
	$\overline{x_i}$ or $x_i$ based on whether the variable is inverted
	or not.  Figure~\ref{fig:treewidth-3} illustrates this reduction.  Now,
	if and only if there is a path cover by vertices of size~$n$ (where
	$n$ is the number of variables), the SAT instance is solvable.  The graph created has a feedback vertex number of 2: if we remove both $z_0$ and $z_1$ from the graph, the only edges left are those between the vertices $x$ and $\overline{x}$ for each variable. The graph also has treewidth 3. Consider the following tree decomposition. For every variable $x$, we add a bag $\{x, \overline{x}, z_0, z_1\}$. Our tree decomposition now consists of all of these bags in one single path in an
	arbitrary order. Each vertex is in some bag. For every edge, there is a bag that contains both of its vertices. Finally, the only vertices that occur in multiple bags are $z_0$ and $z_1$, and since they are in every bag, this forms a subtree. Each bag has a size of 4, thus the graph has a treewidth of 3.
	
	\begin{figure}[tb]
		\centering
		\includegraphics[width=0.7\textwidth]{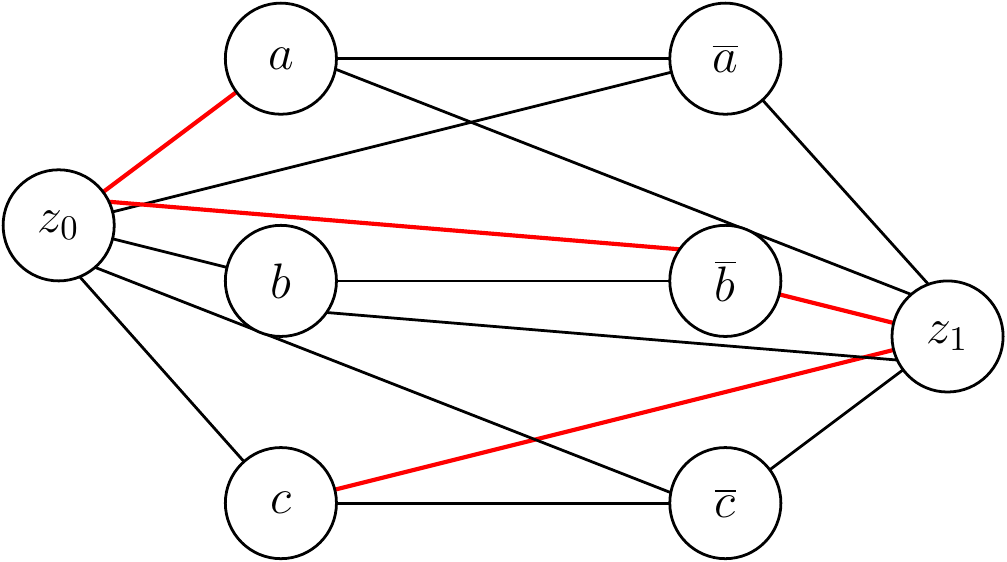}
		\caption{Illustration of the proof of Theorem~\ref{thm:treewidth-appendix}. The resulting graph has treewidth 3. The path shown in red represents the clause $a\vee \overline{b}\vee c$.}
		\label{fig:treewidth-3}
	\end{figure}
\end{proof}

\end{document}